\theoremstyle{plain}
 \newtheorem{theorem}{Theorem}[section]
 \newtheorem{lemma}{Lemma}[section]
\theoremstyle{definition}
\theoremstyle{remark}
 \newtheorem{remark}{Remark}[section]
 \numberwithin{equation}{section}
\renewcommand{\leq}{\leqslant}
\renewcommand{\geq}{\geqslant}
\title[Asymptotic Existence of Proportionally Fair Allocations]{Asymptotic Existence of Proportionally Fair Allocations}
\keywords{proportional division, fair division, asymptotic analysis, social choice}
\author[Warut Suksompong]{Warut Suksompong}
\address{
Department of Computer Science \\ % \hfill (Received 00 00 2010)\\
Stanford University   \\ %\hfill (Revised  00 00 2010)\\
Stanford, CA 94305\\
USA}
\email{warut@cs.stanford.edu}
\begin{document}

%{\begin{flushleft}\baselineskip9pt\scriptsize
%PUBLICATIONS DE L'INSTITUT MATH\'EMATIQUE\newline
%Nouvelle s\'erie, tome 91(105) (2012), od--do \hfill DOI:
%\end{flushleft}}
\vspace{18mm} \setcounter{page}{1} \thispagestyle{empty}

\begin{abstract}
Fair division has long been an important problem in the economics literature. In this note, we consider the existence of \emph{proportionally fair allocations} of indivisible goods, i.e., allocations of indivisible goods in which every agent gets at least her proportionally fair share according to her own utility function. We show that when utilities are additive and utilities for individual goods are drawn independently at random from a distribution, proportionally fair allocations exist with high probability if the number of goods is a multiple of the number of agents or if the number of goods grows asymptotically faster than the number of agents.
\end{abstract}

\maketitle

\section{Introduction}

The allocation of goods among interested agents is a task that occurs frequently in practical situations and is therefore an important issue for the society. Some goods, such as land and cake, are \emph{divisible}---each piece of land or cake can be split among multiple agents. Others, like houses and cars, are \emph{indivisible}---each house or car cannot be split among different agents. A major concern when allocating goods among agents is that the resulting allocation is \emph{fair}. Several notions of fairness have been considered in the literature. For example, an \emph{envy-free allocation} is one in which every agent values her bundle at least as much as any other agent's bundle \cite{Foley67,Varian74}, while an allocation achieving \emph{max-min fairness} is one that maximizes the minimum utility among all agents \cite{BezakovaDa05}. We refer to \cite{Brams06,Klamler10} for an overview of the fair division literature.

In this note, we consider \emph{proportionally fair allocations}, i.e., allocations in which every agent gets at least her proportionally fair share according to her own utility function \cite{Steinhaus48}. In other words, in a proportionally fair allocation, the utility that each agent gets from her allocation is at least a $1/n$ fraction of her utility of the whole set of goods. A proportionally fair allocation is the first kind of fair division studied in the literature, and is therefore sometimes referred to as a \emph{simple fair division}. The existence of a proportionally fair allocation is guaranteed if the following two conditions are satisfied simultaneously: (i) there is no indivisible good with positive value; (ii) the utility of an agent for a piece is equal to the sum of the utilities of the agent for its parts when the piece is divided into several parts. In general, however, proportionally fair allocations are not guaranteed to exist; a simple example is when we have one good and two agents both of whom value the good positively. 

Even though proportionally fair allocations of indivisible goods are not guaranteed to exist, we show that under rather general settings, they exist with high probability as the number of agents and goods grows. We assume that utilities of agents for goods are drawn independently at random from a distribution, and we make a very common assumption that utilities are \emph{additive}, i.e., the utility of an agent for a bundle of goods is the sum of the utilities of the agent for each good. We show that when the distribution does not put all probability on a single point, proportionally fair allocations exist with high probability if the number of goods is a multiple of the number of agents or if the number of goods grows asymptotically faster than the number of agents. In other words, settings in which proportionally fair allocations do not exist are rare exceptions when the number of goods and agents is large and satisfies one of the aforementioned relations.

\subsection{Related work}

Dickerson et al. \cite{DickersonGoKa14} considered asymptotic existence and nonexistence of envy-free allocations. They showed that under additive utilities, envy-free allocations are unlikely to exist even when the number of goods is larger than the number of agents by a linear fraction. On the other hand, they proved that when the number of goods is larger than the number of agents by a logarithmic factor, such allocations are likely to exist under certain technical conditions on the probability distribution. Amanatidis et al. \cite{AmanatidisMaNi15} and Kurokawa et al. \cite{KurokawaPrWa15} considered allocations that gives each agent a maximin share guarantee and showed that such allocations exist with high probability when utilities are additive. Asymptotic statements have been considered in other areas of economics as well. For instance, Manea \cite{Manea09} established that the allocation obtained by the random serial dictatorship mechanism is ordinally inefficient with high probability when agents' preference profiles are drawn at random. Incentives and stability in large matching markets have also been considered in the literature \cite{ImmorlicaMa05,KojimaPa10}.

We now make some comments on the relation between our results and those of Dickerson et al. \cite{DickersonGoKa14} concerning envy-free allocations. As we will later elaborate, proportional fairness is a weaker notion of fairness than envy-freeness, i.e., envy-free allocations are also proportionally fair when utilities are additive. Dickerson et al. showed that under additive utilities, envy-free allocations are unlikely to exist when the number of goods is larger than the number of agents by a linear fraction. Theorem \ref{smallitems} contrasts that result by showing that proportionally fair allocations are likely to exist even when the number of goods is the same as the number of agents. 

On the other hand, Dickerson et al. proved that under certain technical conditions on the probability distribution, envy-free allocations are likely to exist when the number of goods is larger than the number of agents by a logarithmic factor. (They left open the gap between constant and logarithmic factors for envy-freeness.) Theorem \ref{largeitems} shows that for proportionally fair allocations, any superconstant gap suffices to establish likelihood of existence.  Moreover, although Dickerson et al.'s result allows for the utilities of different goods to be drawn from different distributions, it does not imply our result even in the case where $m=\Omega(n\log n)$ because their result relies on technical conditions on the probability distribution, whereas Theorem \ref{largeitems} holds for general probability distributions. Remark \ref{different} reveals some limitations on generalizing our results to allow for the utilities of different goods to be drawn from different distributions. Nevertheless, we still think that this is an interesting direction that should be explored in future work.

%%%%%%%%%%%%%%%%%%%%%%%%%%%%%%%%%%%%%%%%%%%%%%%%%%%%%%%%%%%%%%%%%%%%%%%%%%%%%%%%%%%%%%%%%%%%%%%%

\section{Preliminaries}

Let $N=\{1,2,\ldots,n\}$ denote the set of agents, and $G$ the set of goods with $|G|=m$. Assume that the utility $u_i(g)$ of agent $i\in N$ for good $g\in G$ lies in $[0,1]$. This constraint does not introduce a loss of generality; since we will not engage in comparisons of utilities across agents, we can scale down all utilities by their maximum. As is very common, we assume that utilities are \emph{additive}, i.e., $u_i(G')=\sum_{g\in G'}u_i(g)$ for any agent $i\in N$ and any subset of goods $G'\subseteq G$.

For agents $i\in N$ and goods $g\in G$, the utilities $u_i(g)$ are drawn independently from a distribution $\mathcal{D}$ with constant mean $\mu$. An \emph{allocation} $\mathcal{G}=(G_1,G_2,\ldots,G_n)$ is a partition of goods into bundles for the agents so that agent $i$ receives bundle $G_i$. An allocation $\mathcal{G}$ is said to be \emph{proportionally fair} if $u_i(G_i)\geq\frac{1}{n}\cdot u_i(G)$ for all $i\in N$, that is, every agent gets at least her proportionally fair share according to her own utility function. 

Note that when utilities are additive, an envy-free allocation is also proportionally fair. Indeed, if an allocation is envy-free, every agent likes her bundle at least as much as the bundle of any other agent. Additivity then implies that her bundle is worth at least a $1/n$ fraction of the whole set of goods. On the other hand, a proportionally fair allocation is always envy-free when there are two agents, but not when there are at least three agents. An example is when an agent thinks that her bundle is worth $1/3$ of the whole set of goods, while the bundle of another agent is worth the remaining $2/3$ to the first agent.

If $\mathcal{D}$ puts all probability on a single point, it is clear that a proportionally fair allocation exists if and only if the number of items $m$ is a multiple of $n$. We assume henceforth that $\mathcal{D}$ does not put all probability on a single point. Nevertheless, we will allow $\mathcal{D}$ to be a discrete distribution, an assumption that holds in many natural settings. For instance, $\mathcal{D}$ can take the value $0$ with some probability $p$ (if the agent does not like the good) and the value $1$ with the remaining probability $1-p$ (if she likes the good).

We begin by stating a general property of distributions which we will need for our results. We omit the proof since it is straightforward.

\begin{lemma}
Let $\mathcal{D}$ be a distribution with mean $\mu$ that does not put all probability on a single point, and let $X$ be a random variable drawn from $\mathcal{D}$. Then there exist constants $\delta,\beta>0$ such that $\text{Pr}[X\geq (1+\delta)\mu]\geq\beta.$
\end{lemma}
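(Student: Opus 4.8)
The plan is to first argue that any non-degenerate distribution must assign positive probability to the region strictly above its mean, and then to promote this qualitative fact into the quantitative bound by a continuity argument. First I would establish the key intermediate claim that $\Pr[X > \mu] > 0$. Suppose toward a contradiction that $X \le \mu$ almost surely. Then $\mu - X$ is a nonnegative random variable with $E[\mu - X] = \mu - \mu = 0$, which forces $\mu - X = 0$ almost surely, i.e.\ $X = \mu$ almost surely. But this means $\mathcal{D}$ is the point mass at $\mu$, contradicting the hypothesis that $\mathcal{D}$ does not put all probability on a single point. Hence $p := \Pr[X > \mu] > 0$.

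With $p > 0$ in hand, I would extract the constants $\delta,\beta$ by approximating the open condition $X > \mu$ from within. In the operative case $\mu > 0$ (the only one arising in our setting, since utilities lie in $[0,1]$ and are non-degenerate, forcing $\mu > 0$), consider the family of events $A_k = \{X \ge (1 + 1/k)\mu\}$ for $k = 1,2,\ldots$, which increases in $k$. Since $\mu > 0$, any outcome with $X > \mu$ satisfies $X \ge (1 + 1/k)\mu$ for all sufficiently large $k$, so $\bigcup_{k} A_k = \{X > \mu\}$. Continuity of measure from below then gives $\Pr[A_k] \to p > 0$, so there is some $k_0$ with $\Pr[A_{k_0}] > 0$; taking $\delta = 1/k_0$ and $\beta = \Pr[A_{k_0}]$ yields $\Pr[X \ge (1+\delta)\mu] \ge \beta$, as required.

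For completeness I would note that the remaining cases are immediate: when $\mu \le 0$ one has $(1+\delta)\mu \le \mu$ for every $\delta > 0$, so $\{X \ge (1+\delta)\mu\} \supseteq \{X > \mu\}$, and any choice of $\delta$ works with $\beta = p$.

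The only point requiring care is the passage from the qualitative fact $\Pr[X > \mu] > 0$ to a fixed positive gap $\delta$ paired with a fixed positive floor $\beta$: one cannot take $\delta$ too large, since $\Pr[X \ge (1+\delta)\mu]$ may vanish once $(1+\delta)\mu$ exceeds the support of $\mathcal{D}$. The continuity-of-measure step is exactly what selects a $\delta$ small enough to retain a fixed fraction of the mass lying strictly above $\mu$. Everything else follows routinely from the definition of the mean, which is why the statement admits such a short proof.
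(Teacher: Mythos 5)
Your proof is correct. The paper omits the proof of this lemma entirely (calling it straightforward), and your argument --- deriving $\Pr[X>\mu]>0$ from non-degeneracy via the nonnegative random variable $\mu - X$, then using continuity of measure from below to extract fixed constants $\delta,\beta>0$, with the easy side case $\mu\le 0$ handled separately --- is exactly the standard argument the author presumably had in mind.
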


%\begin{proof}
%Suppose for contradiction that the above statement fails for all $\delta,\beta>0$. We deduce that $\text{Pr}[X\geq (1+\delta)\mu]=0$ for any $\delta>0$, and hence $\text{Pr}[X> \mu]=0$. Since $E[X]=\mu$, it follows that $\text{Pr}[X< \mu]=0$ as well. But this implies that $\text{Pr}[X=\mu]=1$, contradicting our assumption that $\mathcal{D}$ that does not put all probability on a single point.
%\end{proof}

Hence from now on, we will assume the existence of constants $\delta,\beta>0$ such that $\text{Pr}[X\geq (1+\delta)\mu]\geq\beta$ for a random variable $X$ drawn from $\mathcal{D}$.

For our results, we will also need the following well-known bound. The proof can be found, for example, in \cite{MitzenmacherUp05}.

\begin{lemma}[Chernoff]
\label{chernoff}
Let $X_1,X_2,\ldots,X_m$ be independent random variables taking values in $[0,1]$, and let $X$ denote their sum. Then for any $\epsilon\in(0,1)$, we have \[\emph{Pr}[X\geq (1+\epsilon)E[X]]\leq e^{-\frac{\epsilon^2E[X]}{3}}.\]
\end{lemma}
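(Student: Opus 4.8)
The plan is to use the exponential moment method, applying Markov's inequality not to $X$ directly but to $e^{tX}$ for a parameter $t>0$ to be optimized later. First I would write, for any $t>0$,
\[
\Pr[X\geq(1+\epsilon)E[X]] = \Pr\bigl[e^{tX}\geq e^{t(1+\epsilon)E[X]}\bigr] \leq \frac{E[e^{tX}]}{e^{t(1+\epsilon)E[X]}},
\]
where the inequality is Markov's inequality applied to the nonnegative random variable $e^{tX}$. Since the $X_i$ are independent, the moment generating function factorizes as $E[e^{tX}]=\prod_{i=1}^m E[e^{tX_i}]$, which reduces the problem to bounding the contribution of a single good.

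Next I would bound each factor. Writing $p_i=E[X_i]$ and using the convexity of $x\mapsto e^{tx}$ on $[0,1]$ (so that the function lies below the chord joining its endpoints), one obtains $e^{tX_i}\leq 1+X_i(e^t-1)$ pointwise, hence $E[e^{tX_i}]\leq 1+p_i(e^t-1)\leq e^{p_i(e^t-1)}$, where the last step uses $1+y\leq e^y$. Multiplying over $i$ and recalling $\sum_i p_i=E[X]$ gives $E[e^{tX}]\leq e^{(e^t-1)E[X]}$, and therefore
\[
\Pr[X\geq(1+\epsilon)E[X]] \leq \exp\Bigl(E[X]\bigl[(e^t-1)-t(1+\epsilon)\bigr]\Bigr).
\]
I would then minimize the exponent over $t>0$; elementary calculus shows the minimum is attained at $t=\ln(1+\epsilon)$, which yields the clean intermediate bound $\Pr[X\geq(1+\epsilon)E[X]]\leq\bigl(e^{\epsilon}/(1+\epsilon)^{1+\epsilon}\bigr)^{E[X]}$.

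Finally, I would convert this into the stated form by proving the scalar inequality $e^{\epsilon}/(1+\epsilon)^{1+\epsilon}\leq e^{-\epsilon^2/3}$ for $\epsilon\in(0,1)$, equivalently $g(\epsilon):=(1+\epsilon)\ln(1+\epsilon)-\epsilon-\epsilon^2/3\geq0$ on $(0,1)$. This follows from $g(0)=0$ together with $g'(\epsilon)=\ln(1+\epsilon)-\tfrac{2}{3}\epsilon\geq0$ on $(0,1)$, which I would verify by noting $g'(0)=0$ and that $g''(\epsilon)=\frac{1}{1+\epsilon}-\frac{2}{3}$ changes sign only once (at $\epsilon=\tfrac12$), so $g'$ rises and then falls but remains nonnegative since $g'(1)=\ln 2-\tfrac23>0$. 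I expect this last step to be the main, if routine, obstacle: the earlier manipulations are purely mechanical, whereas pinning down the constant $3$ in the exponent requires the careful elementary estimate above, and it is precisely the restriction $\epsilon\in(0,1)$ that makes this clean constant work.
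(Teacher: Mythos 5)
Your proof is correct and complete: the exponential moment method, the chord bound $e^{tX_i}\leq 1+X_i(e^t-1)$ to handle general $[0,1]$-valued variables, the optimization at $t=\ln(1+\epsilon)$, and the elementary verification that $(1+\epsilon)\ln(1+\epsilon)-\epsilon\geq\epsilon^2/3$ on $(0,1)$ are all carried out accurately. The paper itself does not prove this lemma, instead citing Mitzenmacher and Upfal; your argument is essentially the standard one given there, so there is nothing to contrast.
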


Finally, we define asymptotic (Landau) notations used in this paper. Given two functions $f$ and $g$, we write $f(n)=\omega(g(n))$ or $g(n)=o(f(n))$ to mean that $f$ dominates $g$ asymptotically. In other words, for every fixed positive number $k$, for all sufficiently large $n$, we have $f(n)\geq kg(n)$. Similarly, we write $f(n)=\Omega(g(n))$ to mean that $f$ is bounded below by $g$ asymptotically. That is, there exists a fixed positive number $k$ such that for all sufficiently large $n$, we have $f(n)\geq kg(n)$. We refer to \cite{CormenLeRi09} for a thorough treatment of asymptotic notations.

%%%%%%%%%%%%%%%%%%%%%%%%%%%%%%%%%%%%%%%%%%%%%%%%%%%%%%%%%%%%%%%%%%%%%%%%%%%%%%%%%%%%%%%%%%%%%%%%

\section{Our results}

We are now ready to state our results. For our first result, we show that proportionally fair allocations exist with high probability if the number of goods is a multiple of the number of agents.

\begin{theorem}
\label{smallitems}
Let $m=kn$ for some constant positive integer $k$. Then a proportionally fair allocation exists with probability approaching $1$ as $n\rightarrow\infty$ (or equivalently, as $m\rightarrow\infty$).
\end{theorem}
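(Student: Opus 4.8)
The plan is to produce an allocation that gives every agent $k$ goods each of which she values at least $(1+\delta)\mu$; such a bundle is then worth at least $(1+\delta)k\mu$ to her, while I will argue that her proportional share $\frac{1}{n}u_i(G)$ is strictly below $(1+\delta)k\mu$ with high probability, so the allocation is proportionally fair. Since $\Pr[X\geq(1+\delta)\mu]\geq\beta$ continues to hold if $\delta$ is decreased, I may assume $\delta\in(0,1)$ so that the Chernoff bound applies. Call a good $g$ \emph{desirable} to agent $i$ if $u_i(g)\geq(1+\delta)\mu$; by our standing assumption, each good is desirable to each agent independently with probability at least $\beta$.

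First I would control the proportional shares. For each agent $i$, the quantity $u_i(G)$ is a sum of $m=kn$ independent $[0,1]$ variables of mean $\mu$, so $E[u_i(G)]=kn\mu$, and Lemma \ref{chernoff} gives $\Pr[u_i(G)\geq(1+\delta)kn\mu]\leq e^{-\delta^2 kn\mu/3}$. A union bound over the $n$ agents shows that with probability at least $1-ne^{-\delta^2 kn\mu/3}\to1$, every agent satisfies $\frac{1}{n}u_i(G)<(1+\delta)k\mu$. On this event it suffices to produce an allocation in which each agent receives exactly $k$ goods that are all desirable to her; note that this uses all $kn$ goods and hence is genuinely a partition.

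The existence of such an allocation is the heart of the argument, and I would phrase it through Hall's condition. Form the bipartite graph on agents and goods whose edges are the desirable pairs; I want to assign $k$ distinct goods to each agent from its neighborhood with the assignments disjoint. By the defect form of Hall's theorem (equivalently, ordinary Hall applied to $k$ identical copies of each agent), this is possible precisely when $|N(S)|\geq k|S|$ for every $S\subseteq N$, where $N(S)$ is the set of goods desirable to at least one agent of $S$; since both sides of the graph have size $kn$, saturating the agents simultaneously saturates the goods. The main obstacle is showing that this condition holds with high probability, which I would do by a union bound: if it fails for some $S$ with $|S|=s$, then at least $k(n-s)+1$ goods are simultaneously undesirable to all $s$ agents of $S$, an event of probability at most $\binom{kn}{k(n-s)+1}(1-\beta)^{s(k(n-s)+1)}$, and there are $\binom{n}{s}$ choices of $S$. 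I would then bound the sum $\sum_{s=1}^{n}\binom{n}{s}\binom{kn}{k(n-s)+1}(1-\beta)^{s(k(n-s)+1)}$ by splitting into two regimes: for small $s$ the binomial coefficients are at most $(en/s)^{(k+1)s}$ and are overwhelmed by the factor $(1-\beta)^{\Omega(sn)}$, while for $s$ near $n$ it is cleaner to write $t=n-s$, so that $\binom{n}{t}\binom{kn}{kt+1}$ is only polynomial in $n$ whereas $(1-\beta)^{(n-t)(kt+1)}$ is exponentially small. In both regimes each term is $e^{-\Omega(n)}$, so the sum tends to $0$. Combining the resulting Hall event with the concentration event from the previous paragraph by a final union bound yields a proportionally fair allocation with probability approaching $1$.
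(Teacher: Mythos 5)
Your argument is correct, and its first half (Chernoff plus a union bound to reduce the problem to giving each agent $k$ goods she values at least $(1+\delta)\mu$) coincides with the paper's. Where you genuinely diverge is in the combinatorial core. The paper splits the $kn$ goods into $k$ arbitrary groups of $n$, finds a perfect matching between agents and goods within each group, and invokes the Erd\H{o}s--R\'{e}nyi theorem on perfect matchings in random bipartite graphs as a black box to get an exponentially small failure probability per group. You instead seek a single $k$-to-$1$ assignment, characterize its existence by the defect form of Hall's condition $|N(S)|\geq k|S|$, and verify that condition directly with a union bound over pairs $(S,T)$ of agent sets and witness sets of $k(n-|S|)+1$ commonly undesirable goods. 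Your route is more self-contained: it needs no external random-graph theorem, and it sidesteps the (minor but real) issue that the cited theorem is stated for edges present with probability exactly $p$ rather than ``at least $\beta$,'' which the paper handles only implicitly by monotonicity. Your explicit reduction to $\delta\in(0,1)$ so that Lemma \ref{chernoff} applies is also a detail the paper glosses over. The paper's approach buys brevity and a clean modular structure (one matching lemma, applied $k$ times); yours buys elementarity at the cost of the two-regime estimate on $\sum_s\binom{n}{s}\binom{kn}{k(n-s)+1}(1-\beta)^{s(k(n-s)+1)}$, whose terms are indeed $e^{-\Omega(n)}$ in both regimes as you claim (the phrase ``only polynomial in $n$'' for $\binom{n}{t}\binom{kn}{kt+1}$ is loose for growing $t$, but the bound $e^{O(t\log n)-\Omega(nt)}$ still carries the day). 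Both proofs yield exponentially small failure probability.
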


Before we go on to the formal proof of the theorem, we sketch its outline here. We first show the statement for $k=1$, i.e., the number of goods is equal to the number of agents. The high-level idea is that when this number is large, the utility that an agent has for the whole set of goods is unlikely to be much higher than the corresponding expected utility. This means that we only need to match each agent to a good that she values slightly more than average. Since each pair of good and agent satisfies this condition independently and with constant probability, such a matching is likely to exist. For the case of general $k$, we simply divide the goods into $k$ groups and perform a matching for each group.

\begin{proof}
We first show the statement for $k=1$. Consider an agent $i\in N$, and let $U_i=\sum_{g\in G} u_i(g)$. Then $E[U_i]=n\mu$. By Lemma \ref{chernoff}, we have \[\text{Pr}[U_i\geq (1+\delta)n\mu]\leq e^{-\frac{\delta^2n\mu}{3}}.\] By the union bound, with probability that goes to $1$ as $n\rightarrow\infty$, it holds that $U_i\leq (1+\delta)n\mu$ for all $i$. This implies that it suffices to match each agent with a good that she values at least $(1+\delta)\mu$. By the assumption on our distribution $\mathcal{D}$, the probability of each good satisfying this condition is at least $\beta$.

Consider a bipartite graph with one set of vertices corresponding to agents and the other set of vertices corresponding to goods. There exists an edge between an agent and a good if the agent values the good at least $(1+\delta)\mu$. Hence each edge exists with probability at least $\beta$ independently of the remaining edges. By a result on random matrices (see, e.g., Theorem 2 of \cite{ErdosRe64}), the probability that there does not exist a perfect matching is exponentially small. It follows that a proportionally fair allocation exists with probability $1$ as $n\rightarrow\infty$, as desired.

We now pass to the case of a general constant positive integer $k$. Divide the goods into $k$ groups, each with $n$ goods. For each group, we find a matching between agents and goods as before. By the union bound, the probability that there does not exist a perfect matching for some group is still exponentially low. We then assign to each agent the $k$ goods obtained from each of the groups. It is easy to see that the resulting allocation is proportionally fair.
\end{proof}

Note that the argument in the proof of Theorem \ref{smallitems} also shows the existence with high probability of a proportionally fair allocation that gives all agents an equal number of goods. As a result, we obtain a proportionally fair allocation that seems fair in the eyes of an outsider as well. Moreover, the probability that a proportionally fair allocation does not exist is exponentially small in the number of goods. This contrasts with the polynomially small probability obtained for envy-free allocations by Dickerson et al. \cite{DickersonGoKa14}

Clearly, if all agents have a positive utility for each good, the number of goods needs to be at least as large as the number of agents for a proportionally fair allocation to exist. However, the following remark shows that this condition is not sufficient for a proportionally fair allocation to exist with high probability.

\begin{remark}
\label{divisible}
Let $m=2n-1$, and suppose that utilities are drawn from the uniform distribution on $[0.4,0.6]$. Each agent's utility of the whole set of goods is at least $0.4\cdot (2n-1)$, and so in a proportionally fair allocation she needs a bundle with utility at least $0.4\cdot\frac{2n-1}{n}=0.8-\frac{0.4}{n}>0.6$ for $n\geq 3$. It follows that every agent needs a bundle with at least two goods, but this is impossible since we would need at least $2n$ goods in total.
\end{remark}

Next, we show that we cannot generalize Theorem \ref{smallitems} to the setting where the utilities of goods can be drawn from different distributions.

\begin{remark}
\label{different}
Let $m=n$, and suppose that the utilities of half of the goods are drawn from the uniform distribution on $[0,0.1]$, and the utilities of the other half from $[0.9,1]$. Each agent's utility of the whole set of goods is at least $0.9\cdot \frac{1}{2}n=0.45n$, and so in a proportionally fair allocation an agent needs a bundle with utility at least $0.45$. But this is impossible, since each agent can receive only one good and half of the goods yield utility at most $0.1$.
\end{remark}

Even though Remark \ref{divisible} shows the importance of the assumption that the number of goods is a multiple of the number of agents when the ratio between the two quantities is bounded by a constant, the significance in fact disappears when the ratio grows. In particular, the following theorem shows that proportionally fair allocations also exist with high probability if the number of goods grows asymptotically faster than the number of agents.

\begin{theorem}
\label{largeitems}
Let $m=\omega(n)$. Then a proportionally fair allocation exists with probability approaching $1$ as $n\rightarrow\infty$ (or equivalently, as $m\rightarrow\infty$).
\end{theorem}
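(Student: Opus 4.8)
The plan is to reduce the case $m = \omega(n)$ to a situation resembling Theorem~\ref{smallitems}, where the key leverage is that when goods vastly outnumber agents, each agent can afford a generous-sized bundle while still receiving far more than a single above-average good. First I would fix an agent $i \in N$ and control her utility $U_i = \sum_{g \in G} u_i(g)$ from above. Since $E[U_i] = m\mu$, Lemma~\ref{chernoff} gives $\mathrm{Pr}[U_i \geq (1+\delta)m\mu] \leq e^{-\delta^2 m\mu/3}$; because $m = \omega(n)$ grows at least linearly in $n$, a union bound over the $n$ agents shows that with probability tending to $1$ we have $U_i \leq (1+\delta)m\mu$ simultaneously for all $i$. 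Hence each agent's proportional share $\frac{1}{n} U_i$ is at most $(1+\delta)\frac{m}{n}\mu$, so it suffices to hand every agent a bundle whose utility reaches this threshold.

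Next I would exploit that $\frac{m}{n} \to \infty$ to make the per-bundle target easy to hit with bundles of size roughly $m/n$. The natural approach is to split the $m$ goods into $n$ groups each of size about $\lfloor m/n \rfloor$ (discarding a negligible remainder of fewer than $n$ goods, which only helps the upper bound and can simply be given away arbitrarily) and to assign one group to each agent, using a matching between agents and groups. For a fixed agent and a fixed group of size $s \approx m/n$, her utility for that group has expectation $s\mu$, so I want to lower-bound the probability that it reaches $(1+\delta)\frac{m}{n}\mu \approx (1+\delta)s\mu$. This is where the distributional assumption $\mathrm{Pr}[X \geq (1+\delta)\mu] \geq \beta$ enters: by a Chernoff-type lower-tail bound, or more crudely by a counting argument, a group of $s$ i.i.d.\ draws will contain a constant fraction at least $(1+\delta)\mu$ with overwhelming probability, pushing the group's total above the target with probability approaching $1$ as $s \to \infty$. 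Thus for $s = \omega(1)$, the probability that an agent is satisfied by a given group is $1 - o(1)$, and in fact I would aim for a failure probability that is summable so that a union bound closes the argument directly, possibly bypassing the matching machinery entirely: if every agent is happy with \emph{every} group with high probability, any assignment of distinct groups works.

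The cleanest route is therefore to avoid matching altogether: show that with probability $1-o(1)$, for every agent $i$ and every group $j$, agent $i$'s utility for group $j$ is at least $(1+\delta)\frac{m}{n}\mu \geq \frac{1}{n}U_i$. This requires controlling $n^2$ events, each of the form ``a sum of $s$ i.i.d.\ bounded variables exceeds roughly $(1+\delta)s\mu$,'' together with the $n$ upper-tail events for the $U_i$. The main obstacle will be verifying that the lower-tail failure probability for a single group decays fast enough in $s = \omega(1)$ to survive the union bound over $n^2$ pairs; I expect this to work because the failure probability is exponentially small in $s$, and $s = \omega(1)$ grows (indeed $s \geq m/n - 1 \to \infty$), so even $n^2 \cdot e^{-cs} \to 0$ once one checks the rate. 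The delicate point is that $m/n$ could grow arbitrarily slowly (e.g.\ like $\log\log n$), so I must ensure the exponential decay in $s$ genuinely dominates the polynomial factor $n^2$; if the slow growth of $s$ makes the naive union bound insufficient, I would fall back on the bipartite-matching argument from Theorem~\ref{smallitems}, where only a \emph{constant} per-edge probability of success is needed and the Erd\H{o}s--R\'enyi perfect-matching bound absorbs the rest.
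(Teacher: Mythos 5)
There is a genuine gap, and it sits at the heart of your ``cleanest route.'' You propose to hand each agent a \emph{fixed} group of $s\approx m/n$ goods and to show that her utility for it reaches $(1+\delta)\frac{m}{n}\mu$ with probability $1-o(1)$. But that target is a factor $(1+\delta)$ \emph{above} the group's expected utility $s\mu$: a sum of $s$ i.i.d.\ variables with mean $s\mu$ concentrates at $s\mu$, so by the upper Chernoff bound the probability of exceeding $(1+\delta)s\mu$ tends to $0$ as $s\to\infty$, not to $1$. Your counting argument does not rescue this: a $\beta$ fraction of draws being at least $(1+\delta)\mu$ only certifies a total of about $\beta(1+\delta)s\mu<(1+\delta)s\mu$. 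More fundamentally, any non-adaptive partition into $n$ equal groups is doomed, because each agent's proportional share $\frac{1}{n}U_i$ and her utility for a fixed group of $m/n$ goods both concentrate around the \emph{same} value $\frac{m}{n}\mu$, so whether she clears her share is essentially a fair coin flip, and all $n$ agents succeeding simultaneously has vanishing probability. The allocation must be chosen adaptively from the realized utilities; the matching machinery you hoped to bypass is not optional.

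The paper's proof creates the slack you are missing in two places. First, it bounds $U_i\leq(1+\frac{\delta}{2})m\mu$ (not $(1+\delta)m\mu$), so the per-agent target is $(1+\frac{\delta}{2})\frac{m}{n}\mu$. Second, it partitions the goods the other way around --- into roughly $m/n$ groups of $n$ goods each --- and within each group finds a perfect matching assigning every agent one good she values at least $(1+\delta)\mu$ (the Erd\H{o}s--R\'enyi bound from Theorem~\ref{smallitems}, with exponentially small failure probability per group). An agent who is matched in a $\frac{1+\delta/2}{1+\delta}$ fraction of the groups already collects utility $(1+\delta)\mu$ times that many goods, which exceeds her target; the gap between $\delta/2$ and $\delta$ is exactly what absorbs both the constant fraction of groups that may fail to have a perfect matching and the up to $n-1$ leftover goods that are discarded (this is where $m=\omega(n)$ is used). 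Your closing hedge --- ``fall back on the bipartite-matching argument'' --- gestures at the right idea, but without the transposed grouping and without reserving the $(1+\frac{\delta}{2})$ versus $(1+\delta)$ slack, the fallback as stated does not assemble into a proof.
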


\begin{proof}
Consider an agent $i\in N$, and let $U_i=\sum_{g\in G} u_i(g)$. Then $E[U_i]=m\mu$. By Lemma \ref{chernoff}, we have \[\text{Pr}\left[U_i\geq \left(1+\frac{\delta}{2}\right)m\mu\right]\leq e^{-\frac{\delta^2m\mu}{12}}.\] By the union bound, with probability that goes to $1$ as $n\rightarrow\infty$, it holds that $U_i\leq \left(1+\frac{\delta}{2}\right)m\mu$ for all $i$. This implies that it suffices to give each agent a set of goods that she values at least $\left(1+\frac{\delta}{2}\right)\frac{m}{n}\mu$.

Divide the goods into groups of $n$ goods, and ignore the leftover goods if there are any. For each agent and each group of goods, we will try to match each agent with a good that she values at least $(1+\delta)\mu$. As in the proof of Theorem \ref{smallitems}, the probability that there does not exist such a matching for any particular group is exponentially small.

We need to obtain a matching for at least $\frac{1+\frac{\delta}{2}}{1+\delta}\cdot\frac{m}{n}$ groups. Since we possibly throw away some leftover goods in the first step, we cannot immediately conclude that this corresponds to a constant fraction of the groups. Nevertheless, using our assumption that $n=o(m)$, it suffices to obtain a matching for at least an $\alpha$ fraction of the groups for some constant $\frac{1+\frac{\delta}{2}}{1+\delta}<\alpha<1$. Indeed, for such $\alpha$ we have 
\[\alpha\cdot\frac{m-n}{n}>\frac{1+\frac{\delta}{2}}{1+\delta}\cdot\frac{m}{n}\]
for large enough $m$, where the term $m-n$ corresponds to the fact that we throw away at most $n$ (in fact, $n-1$) items.

Since the probability that there does not exist such a matching for any particular group is exponentially small in $n$, and the events of failing are independent for distinct groups, we find that the probability that we find a matching for at least an $\alpha$ fraction of the groups approaches $1$. It follows that a proportionally fair allocation exists with probability $1$ as $n\rightarrow\infty$, as desired.
\end{proof}

\subsubsection*{Acknowledgments.} The author thanks an associate editor and the anonymous reviewers for their helpful feedback and acknowledges support from a Stanford Graduate Fellowship.

%\subsubsection*{Acknowledgments.} The heading should be treated as a
%subsubsection heading and should not be assigned a number.

\end{document}